\documentclass[11pt]{article}

\usepackage[margin=1.05in]{geometry}
\usepackage{amsmath,amssymb,amsthm,amsfonts}
\usepackage{mathtools}
\usepackage{array}
\usepackage{booktabs}
\usepackage{xcolor}
\definecolor{linkcol}{rgb}{0.1,0.1,0.45}
\usepackage[colorlinks=true,linkcolor=linkcol,citecolor=linkcol,urlcolor=linkcol]{hyperref}

\newtheorem{theorem}{Theorem}[section]
\newtheorem{proposition}[theorem]{Proposition}

\newtheorem{corollary}[theorem]{Corollary}
\theoremstyle{definition}
\newtheorem{definition}[theorem]{Definition}

\theoremstyle{remark}
\newtheorem{remark}[theorem]{Remark}

\title{Schur States and Many-Body Quantum Walks on Line Graphs}
\author{Musung Kang\\[2pt]
\small Department of Mathematical Sciences, Seoul National University,
Seoul, Republic of Korea\\
\small \texttt{musung098@snu.ac.kr}\\
\small ORCID \href{https://orcid.org/0009-0005-7509-4704}{0009-0005-7509-4704}}
\date{}

\newcommand{\Line}{\ell}
\newcommand{\CC}{\mathbb{C}}
\newcommand{\NN}{\mathbb{N}}
\newcommand{\cH}{\mathcal{H}}
\newcommand{\cF}{\mathcal{F}}
\newcommand{\cS}{\mathcal{S}}
\newcommand{\dd}{\,\mathrm{d}}
\newcommand{\vac}{\Omega}
\newcommand{\ket}[1]{|#1\rangle}
\newcommand{\bra}[1]{\langle#1|}
\newcommand{\bracket}[2]{\langle #1\mid #2\rangle}
\newcommand{\ketbra}[2]{|#1\rangle\!\langle#2|}
\newcommand{\Sym}{\operatorname{Sym}}

\newcommand{\Tr}{\operatorname{Tr}}
\newcommand{\Ran}{\operatorname{Ran}}
\newcommand{\per}{\operatorname{per}}

\begin{document}
\maketitle

\begin{abstract}
Let $H$ be a finite simple graph.  
A one-particle continuous-time quantum walk on its line graph $\Line H$ has one mode for each edge of $H$.  
This paper is a corrected and reorganized account of that setting.

We first record the matrix encoding of an edge-amplitude vector, 
its Schur state. 
The encoding is necessarily complex symmetric
if it is to be complex linear and to retain the global phase,
and it is normalized by $1/\sqrt{2}$.
It is unitary for the Frobenius inner product and is closed under a canonical tensor lift to the categorical product of graphs.
The tensor lift also pins down the convention.
Among all unimodular phase variants the symmetric choice is the only one that survives, and every relaxed variant reduces to a sign.

We then replace the labelled tensor-power description 
by the bosonic and fermionic Fock spaces 
over $\ell^2(EH)$, 
and assemble the standard free-particle apparatus in the form used afterwards. 
This covers the occupation-number basis, the second-quantized line-graph Hamiltonian,
permanent and determinant formulae for many-particle transition amplitudes,
and the one-particle reduced density matrix.
The one graph-theoretic input is the even-Eulerian criterion, 
which is the zero-sum $2$-flow criterion of Wang and Hu, 
recalled here with a short proof to fix conventions.  
Its translation through the incidence identity gives a real equimodular
full-support $-2$ eigenmode of $A(\Line H)$, 
and hence bosonic condensates of energy $-2N$ that are uniform over the edges of $H$.
The fermionic case is stated separately so that the Pauli constraint is explicit.
\end{abstract}

\section{Introduction}\label{sec:intro}

The vertices of the line graph $\Line H$ are the edges of a root graph $H$.
Thus a quantum walk on $\Line H$ describes a particle hopping between edge
modes of $H$, with a hop allowed when the corresponding root edges meet.
Line-graph Hamiltonians also provide standard examples of flat bands: the
unsigned incidence identity forces the spectrum of $A(\Line H)$ to lie above
$-2$, and the kernel of the incidence matrix is precisely the $-2$ eigenspace \cite{Mielke1991,Kollar2020}.

There are two distinct composition problems.
The categorical product of two root graphs gives a tensor product of two distinguishable one-particle systems. 
By contrast, $N$ identical particles on one line graph live in a symmetric or alternating tensor power. 
The latter construction, and hence Fock space, is the appropriate setting for many-body line-graph walks.

The purpose of this paper is expository and corrective, in three parts.
\begin{enumerate}
    \item We give a complex-linear Schur representation of one-particle edge
    amplitudes, resolving the distinction between a complex symmetric
    amplitude matrix and a Hermitian density matrix.  The representation is unitary.
    \item We isolate the tensor compatibility that actually survives at the
    one-particle level and prove that it determines the convention uniquely.
    Phase variants are gauges, only the symmetric choice survives the tensor
    lift, and every relaxed convention reduces to the signs $\pm1$
    (Theorem~\ref{prop:phi-uniqueness} and Remark~\ref{rem:phi-pm1}).
    We will use symmetric and exterior powers for identical particles.
    \item We set out the line-graph walk on bosonic and fermionic Fock space, with the standard free-particle formulae in the form we need, 
    and lift the even-Eulerian construction to many-body eigenstates.
\end{enumerate}
None of the Fock-space material is new. It is assembled here for the flat-band statement of Section~\ref{sec:eulerian} at the level of $N$-particle states.

Section~\ref{sec:oneparticle} fixes the one-particle notation and defines
Schur states.  Section~\ref{sec:tensor} proves the corrected tensor statement.
Section~\ref{sec:fock} develops the many-body Fock formulation, 
and Section~\ref{sec:reduced} explains which one-body data survive for a general many-body state. 
Section~\ref{sec:eulerian} recalls the even-Eulerian criterion and draws its Fock-space consequences.

\section{One-particle line-graph walks and Schur states}
\label{sec:oneparticle}

Throughout, $H=(VH,EH)$ is a finite simple graph with $n=|VH|$ and
$m=|EH|$.  Unless stated otherwise, $H$ is connected and $m>0$.  Its line
graph $\Line H$ has vertex set $EH$, and $e,f\in EH$ are adjacent in
$\Line H$ exactly when $e\ne f$ and $e\cap f\ne\varnothing$.

\subsection{The one-particle walk}

The one-particle Hilbert space is
\[
    \cH_H:=\ell^2(EH)\cong\CC^m,
\]
with orthonormal edge basis $\{\ket e:e\in EH\}$.  We use the one-particle
Hamiltonian and propagator
\[
    h_H:=A(\Line H),
    \qquad
    U_1(t):=e^{-i t h_H}.
\]
Thus $i\,\partial_t\ket{\psi(t)}=h_H\ket{\psi(t)}$ and
$\ket{\psi(t)}=U_1(t)\ket{\psi(0)}$.  A weighted Hermitian hopping matrix may
replace $A(\Line H)$ in Sections~\ref{sec:oneparticle}
to~\ref{sec:reduced}.  The incidence and flat-band statements in
Section~\ref{sec:eulerian} concern
the unweighted adjacency matrix.

\subsection{Complex symmetric Schur states}

Define the Schur space of $H$ by
\[
    \cS_H:=\left\{
      S\in\operatorname{Mat}_{n}(\CC):
      S^{\mathsf T}=S,\quad S_{vv}=0,\quad
      S_{uv}=0\text{ if }\{u,v\}\notin EH
    \right\}.
\]
It is a complex Hilbert space with the Frobenius inner product
\[
    \langle S,T\rangle_F:=\Tr(S^\dagger T).
\]

\begin{definition}[Schur representation and Schur state]
\label{def:schur-state}
For $e=\{u,v\}\in EH$, set
\[
    \iota_H\ket e:=\frac{1}{\sqrt2}
    \bigl(\ket u\!\bra v+\ket v\!\bra u\bigr),
\]
and extend $\iota_H:\cH_H\to\cS_H$ complex-linearly.  The Schur state of
$\ket\psi\in\cH_H$ at time $t$ is
\[
    S_\psi(t):=\iota_H U_1(t)\ket\psi.
\]
Equivalently, if $\psi_e(t)=\bracket{e}{U_1(t)\psi}$, then
\[
    [S_\psi(t)]_{uv}=
    \begin{cases}
      \psi_{\{u,v\}}(t)/\sqrt2,&\{u,v\}\in EH,\\
      0,&\text{otherwise.}
    \end{cases}
\]
\end{definition}

\begin{proposition}[Unitary Schur representation]
\label{prop:schur-unitary}
The map $\iota_H:\cH_H\to\cS_H$ is a complex-linear unitary isomorphism.
In particular,
\[
    \langle\iota_H\psi,\iota_H\varphi\rangle_F
      =\bracket{\psi}{\varphi},
    \qquad
    \|S_\psi(t)\|_F=\|\psi\|.
\]
\end{proposition}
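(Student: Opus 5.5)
The plan is to verify that $\iota_H$ sends the orthonormal edge basis $\{\ket e\}$ to an orthonormal basis of $\cS_H$. A complex-linear map carrying an orthonormal basis onto an orthonormal basis is automatically a unitary isomorphism.

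\emph{Well-definedness and linearity.} For $e=\{u,v\}\in EH$ the matrix $E_e:=\iota_H\ket e=\tfrac1{\sqrt2}(\ket u\!\bra v+\ket v\!\bra u)$ is symmetric. Its diagonal vanishes because $H$ is simple, so $u\neq v$. Its only nonzero entries sit at the positions $(u,v)$ and $(v,u)$, which correspond to an edge. Hence $E_e\in\cS_H$, and since $\cS_H$ is a complex subspace, the complex-linear extension lands in $\cS_H$. This extension also gives the displayed entry formula for $S_\psi(t)$.

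\emph{Orthonormality.} Compute $\langle E_e,E_f\rangle_F=\Tr(E_e^\dagger E_f)=\sum_{a,b}\overline{(E_e)_{ab}}(E_f)_{ab}$. The entries of $E_e$ are real, so conjugation is harmless. If $e\neq f$, the supports $\{(u,v),(v,u)\}$ of the two matrices are disjoint, because distinct edges are distinct unordered pairs; the inner product is therefore $0$. If $e=f$, the sum is $2\cdot\tfrac12=1$.

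\emph{Surjectivity.} Let $S\in\cS_H$. Its nonzero entries occur only at positions $(u,v)$ with $\{u,v\}\in EH$, and symmetry gives $S_{uv}=S_{vu}$. Then $S=\sum_{e=\{u,v\}}\sqrt2\,S_{uv}E_e$, so the matrices $E_e$ span $\cS_H$. Together with orthonormality, they form an orthonormal basis and $\dim\cS_H=m$.

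The inner-product identity follows by sesquilinear expansion in the edge basis, giving $\langle\iota_H\psi,\iota_H\varphi\rangle_F=\sum_e\overline{\psi_e}\varphi_e=\bracket{\psi}{\varphi}$. For the norm statement, $\|S_\psi(t)\|_F=\|U_1(t)\psi\|$ by this identity, and this equals $\|\psi\|$ because $h_H$ is Hermitian, so $U_1(t)=e^{-ith_H}$ is unitary.

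There is no real obstacle here. The only point needing care is the factor $1/\sqrt2$: each edge contributes two symmetric matrix entries, and this factor is exactly what makes the Frobenius norm of $E_e$ equal to $1$.
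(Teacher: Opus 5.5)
Your proof is correct and takes essentially the same route as the paper: the matrices $\iota_H\ket e$ are orthonormal because their supports are disjoint and the $1/\sqrt2$ normalizes each one, they span $\cS_H$, and the norm identity follows from the unitarity of $U_1(t)$. The only difference is that you write out the spanning step explicitly via $S=\sum_{e=\{u,v\}}\sqrt2\,S_{uv}\,\iota_H\ket e$, whereas the paper simply asserts that these matrices form a basis.
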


\begin{proof}
For each edge $e=\{u,v\}$, the two nonzero entries of $\iota_H\ket e$ are
both $1/\sqrt2$.  Distinct edges have disjoint matrix support.  Therefore
\[
 \left\langle\iota_H\ket e,\iota_H\ket f\right\rangle_F=\delta_{ef}.
\]
These matrices form a basis of $\cS_H$, so $\iota_H$ is unitary.  The norm
identity follows because $U_1(t)$ is unitary.
\end{proof}

\begin{remark}[Why the amplitude matrix is not Hermitian]
\label{rem:symmetric-not-hermitian}
If a nonzero $S$ is Hermitian, 
then $i S$ is not Hermitian.  
Consequently a Hermitian amplitude encoding cannot obey 
$S_{a\psi+b\varphi}=aS_\psi+bS_\varphi$ 
for arbitrary complex $a,b$.
It also fails to retain a general global phase under time evolution. 
The Hermitian objects are instead the density operator
$\rho_\psi=\ketbra{\psi}{\psi}$ and its transported operator
$\iota_H\rho_\psi\iota_H^\dagger$ on $\cS_H$.
\end{remark}

The edge probabilities are recovered without losing their normalization.
Define the real symmetric matrix
\[
    W_\psi(t):=2\bigl(\overline{S_\psi(t)}\circ S_\psi(t)\bigr),
\]
where $\circ$ is the entrywise product.  Then, for $e=\{u,v\}$,
\[
    [W_\psi(t)]_{uv}=|\psi_e(t)|^2,
    \qquad
    \sum_{e\in EH}|\psi_e(t)|^2=1
\]
whenever $\psi$ is normalized.  The factor $2$ compensates for the
$1/\sqrt2$ in Definition~\ref{def:schur-state}.

\section{Tensor compatibility and uniqueness of the convention}
\label{sec:tensor}

For graphs $G$ and $K$, their categorical product $G\times K$ has vertex
set $VG\times VK$, with $(u,x)$ adjacent to $(v,y)$ exactly when
$\{u,v\}\in EG$ and $\{x,y\}\in EK$.

For $e=\{u,v\}\in EG$ and $f=\{x,y\}\in EK$, there are two lifted edges
in $G\times K$:
\[
 e\boxtimes_0 f:=\{(u,x),(v,y)\},
 \qquad
 e\boxtimes_1 f:=\{(u,y),(v,x)\}.
\]
Interchanging the names of the endpoints merely interchanges the two lifted
edges.  Define an isometry
\[
 \mathcal W_{G,K}:\cH_G\otimes\cH_K\longrightarrow\cH_{G\times K}
\]
on the edge basis by
\[
 \mathcal W_{G,K}(\ket e\otimes\ket f)
 :=\frac{1}{\sqrt2}
 \left(\ket{e\boxtimes_0 f}+\ket{e\boxtimes_1 f}\right).
\]
The two-edge sets associated with distinct pairs $(e,f)$ are disjoint, which
proves that $\mathcal W_{G,K}$ is indeed an isometry.  Its range is generally
a proper subspace of $\cH_{G\times K}$.

\begin{proposition}[Schur-tensor compatibility]
\label{prop:schur-tensor}
For matrices of compatible sizes,
\[
 (A\otimes B)\circ(C\otimes D)
 =(A\circ C)\otimes(B\circ D).
\]
Moreover, for $\psi\in\cH_G$ and $\varphi\in\cH_K$,
\[
 \iota_{G\times K}\mathcal W_{G,K}(\psi\otimes\varphi)
 =\iota_G(\psi)\otimes\iota_K(\varphi).
\]
Thus the Kronecker product of two normalized Schur states is a normalized
Schur state on the ambient graph $G\times K$.  Its corresponding edge state
lies in the paired-edge subspace $\Ran\mathcal W_{G,K}$.
\end{proposition}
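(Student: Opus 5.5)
The mixed-product identity for the Hadamard product is a direct entrywise check. Index the Kronecker product by pairs, so that $(A\otimes B)_{(i,k),(j,l)}=A_{ij}B_{kl}$. Then both sides of the identity have $(i,k),(j,l)$ entry equal to $A_{ij}C_{ij}B_{kl}D_{kl}$, by commutativity of scalar multiplication. I will record this in one line; it is only needed afterwards to transport the probability matrix $W$, through $W_{\mathcal W(\psi\otimes\varphi)}$ on $G\times K$ being $\tfrac12\,W_\psi\otimes W_\varphi$ up to the factor bookkeeping.

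For the main identity, both sides are complex bilinear in $(\psi,\varphi)$. The map $\iota_{G\times K}\circ\mathcal W_{G,K}$ is linear on $\cH_G\otimes\cH_K$, and $(\psi,\varphi)\mapsto\iota_G\psi\otimes\iota_K\varphi$ is bilinear. It therefore suffices to check the identity on basis pairs $\ket e\otimes\ket f$ with $e=\{u,v\}\in EG$ and $f=\{x,y\}\in EK$.

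On the right-hand side we have
\[
\iota_G\ket e\otimes\iota_K\ket f=\tfrac12\bigl(\ket u\!\bra v+\ket v\!\bra u\bigr)\otimes\bigl(\ket x\!\bra y+\ket y\!\bra x\bigr).
\]
Expanding gives four rank-one terms, with $\ket{(u,x)}\!\bra{(v,y)}$ and $\ket{(v,y)}\!\bra{(u,x)}$ in one group and $\ket{(u,y)}\!\bra{(v,x)}$ and $\ket{(v,x)}\!\bra{(u,y)}$ in the other, each with coefficient $\tfrac12$.

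On the left-hand side, applying $\mathcal W_{G,K}$ and then $\iota_{G\times K}$ gives
\[
\tfrac1{\sqrt2}\cdot\tfrac1{\sqrt2}\Bigl[\bigl(\ket{(u,x)}\!\bra{(v,y)}+\text{transpose}\bigr)+\bigl(\ket{(u,y)}\!\bra{(v,x)}+\text{transpose}\bigr)\Bigr].
\]
The two $1/\sqrt2$ factors combine to the same $\tfrac12$, and the four terms match those on the right. One needs that $e\boxtimes_0 f$ and $e\boxtimes_1 f$ are genuine edges of $G\times K$, which holds by the definition of the categorical product. They are also distinct, since $u\neq v$, so the four matrix units are distinct and off-diagonal. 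Finally, the matching does not depend on the orientation chosen for $e$ or $f$, because swapping endpoints only swaps $\boxtimes_0$ and $\boxtimes_1$, as noted before the statement.

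The remaining claims follow immediately. The left-hand side lies in $\cS_{G\times K}$, so the Kronecker product $\iota_G\psi\otimes\iota_K\varphi$ is a Schur state on $G\times K$, namely the image of the edge state $\mathcal W_{G,K}(\psi\otimes\varphi)\in\Ran\mathcal W_{G,K}$. For normalization, Proposition~\ref{prop:schur-unitary} together with the isometry property of $\mathcal W_{G,K}$ gives $\|\iota_{G\times K}\mathcal W(\psi\otimes\varphi)\|_F=\|\psi\|\,\|\varphi\|=1$. Alternatively, $\|S\otimes T\|_F=\|S\|_F\|T\|_F$ gives the same result. The only real obstacle is the factor bookkeeping, which is exactly what forces the $1/\sqrt2$ normalizations in both $\iota$ and $\mathcal W$ to be consistent.
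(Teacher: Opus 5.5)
Your proposal is correct and follows essentially the same route as the paper: an entrywise check of the mixed Hadamard--Kronecker identity, reduction to basis pairs $\ket e\otimes\ket f$ by (bi)linearity, and matching the four off-diagonal entries on the two lifted edges, each with coefficient $\tfrac12=\tfrac1{\sqrt2}\cdot\tfrac1{\sqrt2}$. Your explicit normalization step, using the unitarity of $\iota_{G\times K}$ and the isometry property of $\mathcal W_{G,K}$, fills in a point the paper's proof leaves implicit.
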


\begin{proof}
Fix row index $(i,a)$ and column index $(j,b)$.  The corresponding entries on
both sides of the first identity equal $A_{ij}C_{ij}B_{ab}D_{ab}$.  For the
second identity, it is enough to
take $\psi=\ket e$ and $\varphi=\ket f$.  Each of the two lifted edges has
amplitude $1/\sqrt2$ under $\mathcal W_{G,K}$ and hence matrix entries $1/2$
under $\iota_{G\times K}$.  The corresponding entries satisfy
\[
 \bigl[\iota_G\ket e\otimes\iota_K\ket f\bigr]_{(i,a),(j,b)}
 =\frac1{\sqrt2}\frac1{\sqrt2}=\frac12,
\]
and every other entry vanishes.  Linearity gives the result for arbitrary
vectors.
\end{proof}

\begin{theorem}[Compatible with Tensor]
\label{prop:phi-uniqueness}
Let $\phi\in\CC$ with $|\phi|=1$.  Give every graph an orientation of its
edges, and for $e=\{u,v\}$ oriented from $u$ to $v$ put
\[
 \iota^\phi_H\ket e:=\frac1{\sqrt2}
 \bigl(\ket u\!\bra v+\phi\ket v\!\bra u\bigr),
\]
extended complex-linearly, with range
\[
 \cS^\phi_H:=\bigl\{S:\ S_{vu}=\phi\,S_{uv}\ \text{on oriented edges},\
 S_{uv}=0\ \text{if}\ \{u,v\}\notin EH\bigr\}.
\]
Then:
\begin{enumerate}
 \item[\emph{(i)}] For every $\phi$ the map $\iota^\phi_H:\cH_H\to\cS^\phi_H$
 is a complex-linear unitary, and the induced edge weights are the same for
 all $\phi$: writing $\iota^\phi_H=\Phi\circ\iota^1_H$ for the unimodular
 matrix $\Phi$ with $\Phi_{uv}=1$ and $\Phi_{vu}=\phi$ on oriented edges, one
 has $\overline{\iota^\phi_H\psi}\circ\iota^\phi_H\psi
 =\overline{\iota^1_H\psi}\circ\iota^1_H\psi$ for every $\psi$.
 \item[\emph{(ii)}] The tensor identity of
 Proposition~\ref{prop:schur-tensor},
 \[
  \iota^\phi_{G\times K}\mathcal W_{G,K}(\psi\otimes\varphi)
  =\iota^\phi_G(\psi)\otimes\iota^\phi_K(\varphi),
 \]
 holds for all $G$, $K$ and all orientations if and only if $\phi=1$.  The
 same condition characterizes the orientation-independent choice.
\end{enumerate}
\end{theorem}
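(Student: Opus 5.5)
Part (i) will follow the proof of Proposition~\ref{prop:schur-unitary} essentially verbatim. For an oriented edge $e=(u\to v)$, the matrix $\iota^\phi_H\ket e$ has exactly two nonzero entries, $1/\sqrt2$ at $(u,v)$ and $\phi/\sqrt2$ at $(v,u)$. Since $|\phi|=1$, its Frobenius norm is $1$. Distinct edges again have disjoint supports, so these matrices are orthonormal. By the defining relation $S_{vu}=\phi S_{uv}$ they span $\cS^\phi_H$, which gives unitarity. The factorization $\iota^\phi_H=\Phi\circ\iota^1_H$ is immediate entrywise. Since every entry of $\Phi$ has modulus one on the support, $\overline{\Phi}\circ\Phi$ equals $1$ there. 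Commutativity of the Hadamard product then gives $\overline{\iota^\phi_H\psi}\circ\iota^\phi_H\psi=\overline{\iota^1_H\psi}\circ\iota^1_H\psi$.

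For (ii), the direction $\phi=1$ is Proposition~\ref{prop:schur-tensor}. There is one point to check: when $\phi=1$ the map $\iota^1_H$ equals $\iota_H$ for every orientation, so the product orientation is irrelevant. For the converse I will compute both sides on basis vectors $\ket e\otimes\ket f$, with $e=(u\to v)$ in $G$ and $f=(x\to y)$ in $K$.

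The right-hand side $\iota^\phi_G\ket e\otimes\iota^\phi_K\ket f$ has entries $\tfrac12$ at $((u,x),(v,y))$, $\tfrac{\phi}{2}$ at $((u,y),(v,x))$ and at $((v,x),(u,y))$, and $\tfrac{\phi^2}{2}$ at $((v,y),(u,x))$. The left-hand side is $\tfrac1{\sqrt2}\bigl(\iota^\phi(e\boxtimes_0f)+\iota^\phi(e\boxtimes_1f)\bigr)$. On the support of each lifted edge it has entries of the form $\tfrac12$ and $\tfrac{\phi}{2}$, one of each per edge, in positions determined by the orientation that $G\times K$ assigns to that edge.

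Now compare the two lifted edges. The edge $\{(u,x),(v,y)\}$ carries the entries $\tfrac12$ and $\tfrac{\phi^2}{2}$ on the right, but $\{\tfrac12,\tfrac{\phi}{2}\}$ on the left, in some order. Matching them forces either $\phi^2=\phi$, or the pair of conditions $\phi=1$ and $\phi^2=1$. In both cases $\phi=1$. I expect no real obstacle here; the only care needed is to handle the quantifier ``for all orientations'' correctly. Since the argument works for whatever orientation is placed on $G\times K$, including a single product (for instance the path $K_2$ times itself), the identity already fails for $\phi\ne1$ in one instance.

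For the final sentence, orientation independence means $\iota^\phi_H\ket e$ is unchanged when $u\to v$ is reversed. Reversal gives $\tfrac1{\sqrt2}(\phi\ket u\!\bra v+\ket v\!\bra u)$, so we need $\phi\cdot 1=1$ and $1=\phi$, that is, $\phi=1$. Conversely, $\phi=1$ is the symmetric convention and is manifestly independent of orientation.
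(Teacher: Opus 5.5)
Your proof is correct and is essentially the paper's: part (i) follows from disjoint supports and $|\Phi_{uv}|=1$, part (ii) expands $\iota^\phi_G\ket e\otimes\iota^\phi_K\ket f$ on basis edges and compares on a lifted edge, and orientation independence uses the same reversal computation. The only difference is the lifted edge you examine. You match entries exactly on $e\boxtimes_0 f$. The paper uses $e\boxtimes_1 f$, where both Kronecker entries equal $\tfrac{\phi}{2}$, so the ratio is $1$ in either orientation and mere membership in $\cS^\phi_{G\times K}$ already forces $\phi=1$. On $e\boxtimes_0 f$, membership alone would allow $\phi^3=1$ for one orientation, and your exact entry comparison correctly rules this out.
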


\begin{proof}
(i) The two nonzero entries of $\iota^\phi_H\ket e$ have modulus $1/\sqrt2$
and distinct edges have disjoint matrix support, so
$\langle\iota^\phi_H\ket e,\iota^\phi_H\ket f\rangle_F=\delta_{ef}$ exactly as
in Proposition~\ref{prop:schur-unitary}.  The factorization
$\iota^\phi_H=\Phi\circ\iota^1_H$ is immediate from the definition, and
$|\Phi_{uv}|=1$ everywhere on $EH$, so the entrywise moduli, hence
$\overline S\circ S$, do not depend on $\phi$.

(ii) Take $\psi=\ket e$ with $e=\{u,v\}$ oriented $u\to v$ and
$\varphi=\ket f$ with $f=\{x,y\}$ oriented $x\to y$.  Expanding the Kronecker
product,
\[
 \iota^\phi_G\ket e\otimes\iota^\phi_K\ket f
 =\tfrac12\bigl(
   \ketbra{ux}{vy}
  +\phi\,\ketbra{uy}{vx}
  +\phi\,\ketbra{vx}{uy}
  +\phi^{2}\,\ketbra{vy}{ux}\bigr).
\]
On the lifted edge $e\boxtimes_0 f=\{(u,x),(v,y)\}$ the two entries are
$\tfrac12$ and $\tfrac12\phi^{2}$.  On $e\boxtimes_1 f=\{(u,y),(v,x)\}$ they
are $\tfrac12\phi$ and $\tfrac12\phi$.  Membership in
$\cS^\phi_{G\times K}$ requires the ratio of the reversed entry to the
forward entry to be $\phi$ or, if that lifted edge is oriented the other way,
$\phi^{-1}$.  On $e\boxtimes_1 f$ that ratio is $1$, so $\phi=1$ or
$\phi^{-1}=1$.  Since $|\phi|=1$, both give $\phi=1$.  Conversely, for
$\phi=1$ both ratios equal $1$ and the identity is
Proposition~\ref{prop:schur-tensor}.  Finally, reversing the orientation of a
single edge $e$ replaces $\iota^\phi_H\ket e$ by
$\tfrac1{\sqrt2}(\ket v\!\bra u+\phi\ket u\!\bra v)$, which agrees with
$\iota^\phi_H\ket e$ for every $e$ precisely when $\phi=1$.
\end{proof}

\begin{remark}
The choice $\phi=i$ illustrates both halves.  It is a complex-linear
unitary encoding with $(\iota^{i}_H\ket e)^\dagger=-i\,\iota^{i}_H\ket e$,
and by (i) it reproduces the edge probabilities $W_\psi$ exactly.  But the two
lifted edges above acquire ratios $-1$ and $1$, neither equal to $i$, so it
does not survive the tensor lift.  A phase inserted in the representation is a
gauge and is invisible to every quantity built from $|S_{uv}|$.  A phase that is
to change the dynamics belongs in the Hamiltonian, as $\Phi\circ A(\Line H)$.
\end{remark}

\begin{remark}[Relaxed conventions reduce to signs]
\label{rem:phi-pm1}
Theorem~\ref{prop:phi-uniqueness} fixes one convention for all
graphs at once.  It is natural to ask how much survives when the
convention is relaxed.  Let the left factor, the right factor, and the
ambient product graph carry independent phases $\phi_L$, $\phi_R$,
$\phi_A$, and let the lift be any isometry supported on the paired
edges.  The support of
$\iota^{\phi_L}_G\psi\otimes\iota^{\phi_R}_K\varphi$ pins such an
isometry to the two lifted edges, and matching the four matrix entries
shows that solutions exist exactly when $\phi_L^2=1$ or $\phi_R^2=1$,
with $\phi_A$ and the two lift amplitudes then determined by the
orientations.  Two cases stand out.  The choice
$(\phi_L,\phi_R,\phi_A)=(\phi,1,\phi)$ is the entrywise gauge of
part~(i) pushed to the left factor.  It breaks the symmetry between
the two factors.  The choice $(\phi_L,\phi_R,\phi_A)=(-1,-1,1)$
encodes both factors antisymmetrically and restores the factor
symmetry through the sign-twisted lift
\[
 \mathcal W^{-}_{G,K}(\ket e\otimes\ket f)
 =\tfrac1{\sqrt2}\bigl(\ket{e\boxtimes_0 f}-\ket{e\boxtimes_1 f}\bigr).
\]
Both cases depend on the chosen orientations.  Hence a convention that
treats the two factors equally and is to apply to the class of all
finite simple graphs is confined to $\phi\in\{\pm1\}$.  Demanding in
addition that the ambient product graph, itself a finite simple graph,
carries the same convention returns $\phi=1$ by part~(ii).
\end{remark}

For later comparison, note that the tensor product in
Proposition~\ref{prop:schur-tensor} describes two distinguishable factors.
Identical particles on one graph require the construction below.

\section{Many-body quantum walks on line graphs}\label{sec:fock}

Fix $H$ and abbreviate $\cH:=\cH_H$ and $h:=h_H$.  For $N\ge1$, let
$U_\pi$ denote the unitary that permutes the factors of $\cH^{\otimes N}$
according to $\pi\in\mathfrak S_N$, and define
\[
 P_+^{(N)}:=\frac1{N!}\sum_{\pi\in\mathfrak S_N}U_\pi,
 \qquad
 P_-^{(N)}:=\frac1{N!}\sum_{\pi\in\mathfrak S_N}
 \operatorname{sgn}(\pi)U_\pi.
\]
The $N$-boson and $N$-fermion spaces are
\[
 \cH_+^{(N)}:=\Sym^N\cH=P_+^{(N)}\cH^{\otimes N},
 \qquad
 \cH_-^{(N)}:=\bigwedge^N\cH=P_-^{(N)}\cH^{\otimes N}.
\]
The Fock spaces are
\[
 \cF_+(\cH):=\bigoplus_{N=0}^{\infty}\Sym^N\cH,
 \qquad
 \cF_-(\cH):=\bigoplus_{N=0}^{m}\bigwedge^N\cH,
\]
with vacuum sector $\Sym^0\cH=\bigwedge^0\cH=\CC\vac$.  Their fixed-sector
dimensions are
\[
 \dim\Sym^N\cH=\binom{m+N-1}{N},
 \qquad
 \dim\bigwedge^N\cH=\binom{m}{N}.
\]

\subsection{Occupation-number bases}

Fix an ordering $EH=\{e_1,\ldots,e_m\}$.  Bosonic creation and annihilation
operators satisfy
\[
 [b_e,b_f^\dagger]=\delta_{ef},
 \qquad [b_e,b_f]=[b_e^\dagger,b_f^\dagger]=0,
\]
whereas fermionic operators satisfy
\[
 \{c_e,c_f^\dagger\}=\delta_{ef},
 \qquad \{c_e,c_f\}=\{c_e^\dagger,c_f^\dagger\}=0.
\]
The normalized occupation bases are
\[
 \ket{\boldsymbol n}_+
 :=\prod_{j=1}^m\frac{(b_{e_j}^\dagger)^{n_j}}{\sqrt{n_j!}}\vac,
 \quad n_j\in\NN\cup\{0\},
 \quad \sum_jn_j=N,
\]
and
\[
 \ket{\boldsymbol n}_-
 :=(c_{e_1}^\dagger)^{n_1}\cdots(c_{e_m}^\dagger)^{n_m}\vac,
 \quad n_j\in\{0,1\},
 \quad \sum_jn_j=N.
\]
The displayed ordering fixes the fermionic sign convention.  We write
$a_{e,+}=b_e$ and $a_{e,-}=c_e$ when a statement applies to either
statistics.  The number operator is
\[
 \widehat N:=\sum_{e\in EH}a_{e,\pm}^\dagger a_{e,\pm}.
\]

\subsection{Second quantization}

The free many-body Hamiltonian is the second quantization of $h$:
\begin{align}
 \mathrm d\Gamma_\pm(h)\big|_{\cH_\pm^{(N)}}
 &:=\left.\sum_{r=1}^N
 I^{\otimes(r-1)}\otimes h\otimes I^{\otimes(N-r)}
 \right|_{\cH_\pm^{(N)}},                                      \label{eq:dGamma-sector}\\
 \mathrm d\Gamma_\pm(h)
 &=\sum_{e,f\in EH}\bra e h\ket f\,
 a_{e,\pm}^\dagger a_{f,\pm}.                                  \label{eq:dGamma-modes}
\end{align}
It annihilates the vacuum.  Since $h=A(\Line H)$,
\[
 \mathrm d\Gamma_\pm(h)=
 \sum_{\{e,f\}\in E(\Line H)}
 \left(a_{e,\pm}^\dagger a_{f,\pm}
 +a_{f,\pm}^\dagger a_{e,\pm}\right).
\]

A number-preserving two-body interaction may be added in the form
\[
 \widehat V_\pm=\frac12\sum_{e,f,g,k\in EH}
 V_{ef;gk}\,a_{e,\pm}^\dagger a_{f,\pm}^\dagger
 a_{k,\pm}a_{g,\pm},
\]
where the coefficients are the matrix elements of a self-adjoint two-body
operator that preserves the symmetric or alternating two-particle sector.
The interacting Hamiltonian is
$\widehat H_\pm=\mathrm d\Gamma_\pm(h)+\widehat V_\pm$.  Statements below
that explicitly use $\mathrm d\Gamma_\pm(h)$ concern the free model.  They do
not automatically survive an arbitrary interaction.

\begin{proposition}[Free many-body evolution]
\label{prop:free-evolution}
For either statistics and every $N$,
\[
 e^{-i t\mathrm d\Gamma_\pm(h)}\big|_{\cH_\pm^{(N)}}
 =\left.U_1(t)^{\otimes N}\right|_{\cH_\pm^{(N)}}.
\]
Moreover, $[\mathrm d\Gamma_\pm(h),\widehat N]=0$.
\end{proposition}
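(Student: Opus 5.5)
The plan is to work first on the full tensor power $\cH^{\otimes N}$ and only afterwards restrict to the symmetric and alternating sectors. Write $h_r:=I^{\otimes(r-1)}\otimes h\otimes I^{\otimes(N-r)}$ and $\widetilde H_N:=\sum_{r=1}^N h_r$ on $\cH^{\otimes N}$. Two facts about these operators are immediate. First, the $h_r$ act on different tensor factors, so they pairwise commute. Second, the exponential of a single factor is
\[
 e^{-ith_r}=I^{\otimes(r-1)}\otimes U_1(t)\otimes I^{\otimes(N-r)}.
\]
Since commuting exponentials multiply, these give
\[
 e^{-it\widetilde H_N}=\prod_{r=1}^N e^{-ith_r}=U_1(t)^{\otimes N}.
\]
Everything here is finite-dimensional, so the exponentials are norm-convergent power series and no domain questions arise.

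Next I check that both $\widetilde H_N$ and $U_1(t)^{\otimes N}$ commute with every permutation unitary $U_\pi$. For $U_1(t)^{\otimes N}$ this is clear, since all factors are equal. For $\widetilde H_N$, conjugation $U_\pi h_r U_\pi^{-1}=h_{\pi(r)}$ merely permutes the summands. Both operators therefore commute with $P_\pm^{(N)}$ and leave $\cH_\pm^{(N)}=\Ran P_\pm^{(N)}$ invariant.

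For an operator $X$ commuting with the projection $P:=P_\pm^{(N)}$, restriction commutes with exponentiation. The power series of $e^{X}$ restricted to $\Ran P$ is term by term the power series of $e^{X|_{\Ran P}}$. By definition \eqref{eq:dGamma-sector}, $\mathrm d\Gamma_\pm(h)|_{\cH_\pm^{(N)}}=\widetilde H_N|_{\cH_\pm^{(N)}}$, and this yields the first claim. The operator $\mathrm d\Gamma_\pm(h)$ on Fock space is the direct sum of these sector operators, with zero on the vacuum. The identity therefore holds sector by sector, which is all the statement asks.

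For $[\mathrm d\Gamma_\pm(h),\widehat N]=0$ there are two routes. In the sector picture, $\widehat N$ acts as the scalar $N$ on $\cH_\pm^{(N)}$, and $\mathrm d\Gamma_\pm(h)$ preserves each sector, so the two commute. I would also give the mode computation from \eqref{eq:dGamma-modes} as a check. In both statistics one has $[a^\dagger_g a_g, a^\dagger_e a_f]=(\delta_{ge}-\delta_{gf})a^\dagger_e a_f$, which is bilinear CCR/CAR bookkeeping. Summing over $g$ gives $0$.

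The only step needing care is the scalar action of $\widehat N$ on each sector. This amounts to the identification of the mode form \eqref{eq:dGamma-modes} with the sector form \eqref{eq:dGamma-sector}, which the paper takes as the standard definition. I will therefore rely on the direct-sum argument and use the commutator calculation only as an independent confirmation that does not depend on that identification.
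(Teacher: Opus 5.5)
Your proposal is correct and follows the paper's proof: the commuting single-factor summands exponentiate to $U_1(t)^{\otimes N}$, and permutation invariance lets this identity restrict to $\cH_\pm^{(N)}$. The only difference is in number conservation: the paper proves $[\mathrm d\Gamma_\pm(h),\widehat N]=0$ directly from the CCR/CAR identity $[\widehat N,a_e^\dagger a_f]=0$, which you give only as a check, while your main sector-wise argument depends on the standard fact that $\widehat N$ acts as the scalar $N$ on each sector.
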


\begin{proof}
On $\cH^{\otimes N}$, the $N$ summands in
Equation~\eqref{eq:dGamma-sector} act on different factors and commute.
Exponentiating their sum therefore gives $U_1(t)^{\otimes N}$.  The sum is
invariant under factor permutations, so it restricts to both statistics.
Finally, the canonical commutation or anticommutation relations give
$[\widehat N,a_e^\dagger a_f]=0$ for every $e,f$, proving number
conservation.
\end{proof}

\subsection{Many-particle transition amplitudes}

For occupation vectors $\boldsymbol n$ and $\boldsymbol r$ with total number
$N$, let $U[\boldsymbol r\mid\boldsymbol n]$ be the $N\times N$ matrix formed
from $U=U_1(t)$ by repeating row $e$ exactly $r_e$ times and column $f$
exactly $n_f$ times.

\begin{proposition}[Permanent and determinant amplitudes]
\label{prop:per-det}
For bosons,
\[
 {}_+\!\bra{\boldsymbol r}
 e^{-i t\mathrm d\Gamma_+(h)}
 \ket{\boldsymbol n}_+
 =\frac{\per U[\boldsymbol r\mid\boldsymbol n]}
 {\sqrt{\prod_e r_e!\prod_f n_f!}}.
\]
For fermions, let $I=\{f:n_f=1\}$ and $J=\{e:r_e=1\}$, listed in the fixed
edge order.  Then
\[
 {}_-\!\bra{\boldsymbol r}
 e^{-i t\mathrm d\Gamma_-(h)}
 \ket{\boldsymbol n}_-
 =\det U[J,I].
\]
\end{proposition}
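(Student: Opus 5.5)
The plan is to reduce everything to the action of the one-particle propagator on creation operators, and then expand a product. By Proposition~\ref{prop:free-evolution}, $\Gamma(U):=e^{-it\mathrm d\Gamma_\pm(h)}$ fixes the vacuum. It also conjugates creation operators linearly:
\[
 \Gamma(U)\,a_{f,\pm}^\dagger\,\Gamma(U)^{\dagger}=\sum_{e\in EH}U_{ef}\,a_{e,\pm}^\dagger,
 \qquad U_{ef}=\bra e U_1(t)\ket f.
\]
To prove this, differentiate $\Gamma(U(t))a_f^\dagger\Gamma(U(t))^\dagger$ in $t$. Using the mode form \eqref{eq:dGamma-modes} together with the CCR/CAR, one has $[a_e^\dagger a_g,a_f^\dagger]=\delta_{gf}a_e^\dagger$ for both statistics. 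The derivative therefore satisfies the linear ODE $\dot X_f=-i\sum_e h_{ef}X_e$, and its solution is the displayed formula.

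\emph{Bosons.} Write $\ket{\boldsymbol n}_+=(\prod_f n_f!)^{-1/2}\,b_{f_1}^\dagger\cdots b_{f_N}^\dagger\vac$, where $(f_1,\dots,f_N)$ lists each edge $f$ exactly $n_f$ times; these are the columns of $U[\boldsymbol r\mid\boldsymbol n]$. Applying $\Gamma(U)$ gives
\[
 (\textstyle\prod_f n_f!)^{-1/2}\sum_{e_1,\dots,e_N}U_{e_1f_1}\cdots U_{e_Nf_N}\,b_{e_1}^\dagger\cdots b_{e_N}^\dagger\vac .
\]
Similarly, write ${}_+\!\bra{\boldsymbol r}=(\prod_e r_e!)^{-1/2}\bra\vac b_{g_N}\cdots b_{g_1}$, where $(g_1,\dots,g_N)$ lists the rows. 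The key step is the Wick-type identity
\[
 \bra\vac b_{g_N}\cdots b_{g_1}\,b_{e_1}^\dagger\cdots b_{e_N}^\dagger\vac=\sum_{\sigma\in\mathfrak S_N}\prod_k\delta_{g_{\sigma(k)}e_k},
\]
which I would prove by induction on $N$: commute $b_{g_1}$ to the right through the creators using the CCR. Substituting this and summing over the $e_k$ leaves $\sum_\sigma\prod_k U_{g_{\sigma(k)}f_k}=\per U[\boldsymbol r\mid\boldsymbol n]$. The normalizations $\sqrt{r_e!}$ and $\sqrt{n_f!}$ are carried unchanged, and no overcounting correction is needed because the permanent sums over all row labellings.

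\emph{Fermions.} The same computation applies with the CAR. Each transposition in the induction now contributes a sign, so the vacuum expectation becomes $\sum_\sigma\operatorname{sgn}(\sigma)\prod_k\delta_{g_{\sigma(k)}e_k}$, and the sum collapses to $\det U[J,I]$. Here the columns follow the order of $I$ and the rows the order of $J$, both in the fixed edge order. The step needing care is the sign bookkeeping. I must check that the bra ${}_-\!\bra{\boldsymbol r}=\bra\vac c_{j_N}\cdots c_{j_1}$, with $j_1<\dots<j_N$, is the adjoint of $c_{j_1}^\dagger\cdots c_{j_N}^\dagger\vac$. That reversal makes the identity permutation carry sign $+1$, so no spurious global sign $(-1)^{N(N-1)/2}$ appears. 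I expect this sign check, together with keeping the multiplicity factors straight in the bosonic Wick identity, to be the main obstacle; the rest is mechanical.

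An alternative route, which I would use as a cross-check, works in first quantization. Proposition~\ref{prop:free-evolution} gives $U^{\otimes N}$ on the sector. One then expands $\ket{\boldsymbol n}_\pm$ via $P_\pm^{(N)}$ applied to $\ket{f_1}\otimes\cdots\otimes\ket{f_N}$, with normalization $\sqrt{N!/\prod n_f!}$, and the (anti)symmetrized sum produces the permanent or determinant directly.
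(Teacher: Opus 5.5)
Your proposal is correct and follows the paper's route: apply the covariance relation $e^{-it\mathrm d\Gamma_\pm(h)}a_f^\dagger e^{it\mathrm d\Gamma_\pm(h)}=\sum_e U_{ef}a_e^\dagger$ to each creation operator in $\ket{\boldsymbol n}_\pm$ and expand. You go beyond the paper in two ways, justifying that relation by the ODE built from $[a_e^\dagger a_g,a_f^\dagger]=\delta_{gf}a_e^\dagger$, and extracting the amplitude through a vacuum Wick identity instead of collecting monomials in the occupation basis; this is only a bookkeeping difference, and your factorial and fermionic sign checks come out right.
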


\begin{proof}
Second quantization gives the covariance relation
\[
 e^{-i t\mathrm d\Gamma_\pm(h)}a_f^\dagger
 e^{i t\mathrm d\Gamma_\pm(h)}
 =\sum_e U_{ef}a_e^\dagger.
\]
Apply it to each creation operator in the occupation vector.  In the bosonic
case, equal monomials are collected without signs, producing the permanent.
The factorials come from the normalized occupation basis.  In the fermionic
case, reordering each monomial into the fixed edge order contributes the
permutation sign, producing the determinant.
\end{proof}

\subsection{Schur states in Fock space}

The one-particle unitary $\iota_H$ induces, for either statistics,
\[
 \iota_{H,\pm}^{(N)}
 :=\left.\iota_H^{\otimes N}\right|_{\cH_\pm^{(N)}}:
 \cH_\pm^{(N)}\longrightarrow
 P_\pm^{(N)}\bigl(\cS_H^{\otimes N}\bigr).
\]

\begin{definition}[Many-body Schur state]
For $\Psi_N\in\cH_\pm^{(N)}$, its many-body Schur state is
\[
 \mathbb S_{\Psi_N}:=\iota_{H,\pm}^{(N)}\Psi_N.
\]
For a Fock vector $\Psi=\bigoplus_N\Psi_N$, set
\[
 \mathbb S_\Psi:=\bigoplus_N\mathbb S_{\Psi_N}.
\]
\end{definition}

\begin{proposition}[Schur-Fock compatibility]
\label{prop:schur-fock}
The map
\[
 \cF_\pm(\iota_H):=\bigoplus_N\iota_{H,\pm}^{(N)}
\]
is unitary onto its Schur-Fock image.  It intertwines the free Hamiltonians:
\[
 \cF_\pm(\iota_H)\,\mathrm d\Gamma_\pm(h)\,
 \cF_\pm(\iota_H)^\dagger
 =\mathrm d\Gamma_\pm(\iota_Hh\iota_H^\dagger).
\]
It also preserves symmetrized and alternating products.
\end{proposition}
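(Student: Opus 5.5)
The plan is to reduce everything to functoriality of tensor powers under the single unitary $\iota_H$ of Proposition~\ref{prop:schur-unitary}.

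\emph{Unitarity.} Since $\iota_H:\cH\to\cS_H$ is unitary, $\iota_H^{\otimes N}:\cH^{\otimes N}\to\cS_H^{\otimes N}$ is unitary. The first key step is the intertwining relation with factor permutations. Writing $U_\pi$ also for the permutation unitaries on $\cS_H^{\otimes N}$, one checks on product vectors that
\[
 \iota_H^{\otimes N}U_\pi=U_\pi\,\iota_H^{\otimes N},
\]
and this extends to all vectors by linearity. Averaging over $\mathfrak S_N$, with or without the sign, gives $\iota_H^{\otimes N}P_\pm^{(N)}=P_\pm^{(N)}\iota_H^{\otimes N}$. Hence $\iota_H^{\otimes N}$ maps $\cH_\pm^{(N)}$ into $P_\pm^{(N)}(\cS_H^{\otimes N})$. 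Applying the same relation to the inverse $(\iota_H^{\dagger})^{\otimes N}$ shows the map is onto. So $\iota_{H,\pm}^{(N)}$ is a unitary between these sectors. Because the sectors are mutually orthogonal, the direct sum $\cF_\pm(\iota_H)$ is unitary onto $\bigoplus_N P_\pm^{(N)}(\cS_H^{\otimes N})$, which is what I take the Schur-Fock image to mean. The vacuum goes to the vacuum.

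\emph{Intertwining of Hamiltonians.} I will work sector by sector using the definition \eqref{eq:dGamma-sector}, rather than the mode form \eqref{eq:dGamma-modes}. On $\cH^{\otimes N}$, conjugating each summand gives
\[
 \iota_H^{\otimes N}\bigl(I^{\otimes(r-1)}\otimes h\otimes I^{\otimes(N-r)}\bigr)(\iota_H^{\dagger})^{\otimes N}
 =I^{\otimes(r-1)}\otimes \iota_Hh\iota_H^\dagger\otimes I^{\otimes(N-r)},
\]
using $\iota_H\iota_H^\dagger=I_{\cS_H}$. Summing over $r$ and restricting to the invariant subspaces from the first step gives the identity in sector $N$; the direct sum over $N$ gives the stated relation.

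Here $\mathrm d\Gamma_\pm(\iota_Hh\iota_H^\dagger)$ is the second quantization over the one-particle space $\cS_H$, defined by the same formula \eqref{eq:dGamma-sector}. If desired, one can note that it equals $\sum_{e,f}\bra e h\ket f\,\tilde a_e^\dagger\tilde a_f$, where $\tilde a_e=\cF_\pm(\iota_H)a_e\cF_\pm(\iota_H)^\dagger$ are the modes attached to the orthonormal basis $\iota_H\ket e$. This follows from the transport of creation operators,
\[
 \cF_\pm(\iota_H)\,a^\dagger(\psi)\,\cF_\pm(\iota_H)^\dagger=a^\dagger(\iota_H\psi),
\]
which in turn follows from $a^\dagger(\psi)\Psi_N=\sqrt{N+1}\,P_\pm^{(N+1)}(\psi\otimes\Psi_N)$ together with the commutation established in the first step.

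\emph{Products.} Preservation of symmetrized and alternating products is immediate from the same computation:
\[
 \iota_H^{\otimes N}P_\pm^{(N)}(\psi_1\otimes\cdots\otimes\psi_N)
 =P_\pm^{(N)}(\iota_H\psi_1\otimes\cdots\otimes\iota_H\psi_N).
\]
Concretely, $\psi_1\vee\cdots\vee\psi_N\mapsto\iota_H\psi_1\vee\cdots\vee\iota_H\psi_N$, and likewise for $\wedge$.

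The only real obstacle is interpretive rather than computational. I need to fix the meaning of the "Schur-Fock image" and of $\mathrm d\Gamma_\pm$ on $\cS_H$, which is now a space of matrices. I will define it by \eqref{eq:dGamma-sector} with $\cS_H$ as the one-particle space and $\iota_Hh\iota_H^\dagger$ as the one-body operator. With that choice every claim reduces to the commutation relation $\iota_H^{\otimes N}U_\pi=U_\pi\iota_H^{\otimes N}$.
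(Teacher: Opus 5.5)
Your proposal is correct and takes essentially the paper's route: unitarity of $\iota_H^{\otimes N}$, its commutation with the permutation unitaries and hence with $P_\pm^{(N)}$, and summand-by-summand conjugation of Equation~\eqref{eq:dGamma-sector}. Your extra remarks, on surjectivity via $(\iota_H^\dagger)^{\otimes N}$ and on the transport of creation operators, are correct and make explicit points the paper leaves implicit.
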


\begin{proof}
Every tensor power of the unitary $\iota_H$ is unitary.  Since the same map
is applied to every factor, $\iota_H^{\otimes N}$ commutes with all factor
permutations and hence with $P_\pm^{(N)}$.  It therefore restricts to the
claimed unitary on each sector.  Conjugating each summand in
Equation~\eqref{eq:dGamma-sector} proves the intertwining identity.  The same
commutation with $P_\pm^{(N)}$ proves compatibility with symmetrized or
alternating products.
\end{proof}

For two independent mode systems, the natural one-particle operation is a
direct sum, not a categorical graph product.  The fixed-number decompositions
are
\[
 \Sym^N(\cH_1\oplus\cH_2)
 \cong\bigoplus_{p+q=N}\Sym^p\cH_1\otimes\Sym^q\cH_2,
\]
and
\[
 \bigwedge^N(\cH_1\oplus\cH_2)
 \cong\bigoplus_{p+q=N}\bigwedge^p\cH_1\otimes\bigwedge^q\cH_2.
\]
This is the Fock-space replacement for treating identical particles as an
unrestricted ordinary tensor product.

\section{One-particle reductions of many-body states}\label{sec:reduced}

A general many-body pure state is not determined by a single $n\times n$
Schur amplitude matrix.  Its canonical one-body datum is the reduced density
matrix.

\begin{definition}[One-particle reduced density matrix]
Let $\Psi_N\in\cH_\pm^{(N)}$ be normalized, with $N\ge1$.  Define
\[
 [\gamma_{\Psi_N}]_{ef}
 :=\bra{\Psi_N}a_f^\dagger a_e\ket{\Psi_N}.
\]
Then $\gamma_{\Psi_N}\succeq0$ and $\Tr\gamma_{\Psi_N}=N$.  The normalized
one-body state is $\rho_{\Psi_N}^{(1)}:=\gamma_{\Psi_N}/N$.
\end{definition}

Positivity follows from
\[
 \bra z\gamma_{\Psi_N}\ket z
 =\left\|\sum_e \overline{z_e}a_e\Psi_N\right\|^2\ge0,
\]
and the trace identity is $\bracket{\Psi_N}{\widehat N\Psi_N}=N$.
The normalized edge occupation distribution is
\[
 p_e(t):=\frac1N[\gamma_{\Psi_N}(t)]_{ee},
 \qquad \sum_{e\in EH}p_e(t)=1.
\]
It induces a real symmetric weighted adjacency matrix on the root graph by
assigning weight $p_{\{u,v\}}(t)$ to $\{u,v\}\in EH$.

\begin{proposition}[Free evolution of the one-body reduction]
\label{prop:1rdm-evolution}
Under the free Hamiltonian $\mathrm d\Gamma_\pm(h)$,
\[
 \gamma_{\Psi_N}(t)=U_1(t)\gamma_{\Psi_N}(0)U_1(t)^\dagger.
\]
If $h=\sum_r\theta_rE_r$ is its spectral decomposition over distinct
eigenvalues, then
\[
 \lim_{T\to\infty}\frac1T\int_0^T\gamma_{\Psi_N}(t)\dd t
 =\sum_rE_r\gamma_{\Psi_N}(0)E_r.
\]
\end{proposition}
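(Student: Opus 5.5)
The plan has two parts: first the covariance of the reduced density matrix under free evolution, then a spectral computation of its time average.

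\emph{Covariance.} Write $\Psi_N(t)=e^{-it\mathrm d\Gamma_\pm(h)}\Psi_N$. Then
\[
 [\gamma_{\Psi_N}(t)]_{ef}=\bra{\Psi_N}e^{it\mathrm d\Gamma_\pm(h)}a_f^\dagger a_e\,e^{-it\mathrm d\Gamma_\pm(h)}\ket{\Psi_N}.
\]
I use the covariance relation from the proof of Proposition~\ref{prop:per-det}, $e^{-it\mathrm d\Gamma}a_f^\dagger e^{it\mathrm d\Gamma}=\sum_g U_{gf}a_g^\dagger$ with $U=U_1(t)$. Replacing $t$ by $-t$ and using $U_1(-t)=U_1(t)^\dagger$ gives the Heisenberg-picture form
\[
 e^{it\mathrm d\Gamma}a_f^\dagger e^{-it\mathrm d\Gamma}=\sum_g\overline{U_{fg}}\,a_g^\dagger .
\]
Taking adjoints gives $e^{it\mathrm d\Gamma}a_e e^{-it\mathrm d\Gamma}=\sum_k U_{ek}a_k$. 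Inserting $e^{-it\mathrm d\Gamma}e^{it\mathrm d\Gamma}=I$ between $a_f^\dagger$ and $a_e$ and substituting, I get
\[
 [\gamma(t)]_{ef}=\sum_{k,g}U_{ek}[\gamma(0)]_{kg}\overline{U_{fg}}=[U\gamma(0)U^\dagger]_{ef}.
\]
This argument is identical for both statistics, since only linearity of the covariance relation is used and no reordering of operators occurs. This index bookkeeping is the only delicate point, specifically keeping the transpose and conjugation straight. As a cross-check I would verify the case $N=1$, where $\gamma=\ketbra{\psi}{\psi}$ in the edge basis and $\psi\mapsto U\psi$.

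\emph{Time average.} With $U_1(t)=\sum_r e^{-it\theta_r}E_r$, the covariance formula gives
\[
 \gamma(t)=\sum_{r,s}e^{-it(\theta_r-\theta_s)}E_r\gamma(0)E_s .
\]
This is a finite sum. The Cesàro average $\frac1T\int_0^T e^{-it\omega}\dd t$ equals $1$ when $\omega=0$. When $\omega\ne0$ it equals $(1-e^{-iT\omega})/(iT\omega)$, which is bounded in modulus by $2/(T|\omega|)$ and so tends to $0$. Because the $\theta_r$ are distinct, $\theta_r-\theta_s=0$ exactly when $r=s$. Averaging term by term and letting $T\to\infty$ therefore leaves only the diagonal terms, which gives $\sum_rE_r\gamma(0)E_r$. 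The limit holds entrywise, hence in any norm, since the dimension is finite.
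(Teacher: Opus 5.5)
Your proposal is correct and follows the paper's route. The covariance relation for creation and annihilation operators gives $\gamma_{\Psi_N}(t)=U_1(t)\gamma_{\Psi_N}(0)U_1(t)^\dagger$ entrywise, and expanding in the spectral projectors and taking Ces\`aro means of $e^{-it(\theta_r-\theta_s)}$ removes the terms with $r\ne s$; you simply write out the index bookkeeping and the $O(1/T)$ decay that the paper leaves implicit.
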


\begin{proof}
The creation-annihilation covariance used in
Proposition~\ref{prop:per-det} gives the first identity entrywise.  Expanding
$\gamma(0)=\sum_{r,s}E_r\gamma(0)E_s$ gives oscillatory factors
$e^{-i t(\theta_r-\theta_s)}$.  Their Ces\`aro means vanish for $r\ne s$
and equal $1$ for $r=s$, proving the second identity.
\end{proof}

Two model classes make the reduction explicit.  An \emph{orbital} is a
normalized one-particle state $\phi\in\cH_H$.  The \emph{bosonic
condensate} in $\phi$ is the $N$-fold occupation of this single
orbital,
\[
 \ket{N:\phi}:=\frac{b^\dagger(\phi)^N}{\sqrt{N!}}\vac,
 \qquad
 b^\dagger(\phi):=\sum_{e\in EH}\bracket{e}{\phi}\,b_e^\dagger.
\]
From $b_e\ket{N:\phi}=\sqrt N\,\bracket{e}{\phi}\ket{N-1:\phi}$ one
computes
\[
 [\gamma^{(1)}]_{ef}
 =\bra{N:\phi}b_f^\dagger b_e\ket{N:\phi}
 =N\bracket{e}{\phi}\overline{\bracket{f}{\phi}},
 \qquad\text{that is,}\qquad
 \gamma^{(1)}=N\ketbra{\phi}{\phi}.
\]
Physically this is complete Bose--Einstein condensation in the sense of
Penrose and Onsager~\cite{PenroseOnsager1956}.  The one-body reduction
has a single macroscopic eigenvalue $N$, and its eigenvector $\phi$ is
the condensate orbital.  For a fermionic Slater determinant built from
orthonormal orbitals $\phi_1,\ldots,\phi_N$,
\[
 \gamma^{(1)}=\sum_{j=1}^N\ketbra{\phi_j}{\phi_j},
\]
a rank $N$ projector, so no eigenvalue exceeds $1$.
These formulae also show why the one-particle Schur matrix fully describes a
condensate orbital but not a general correlated many-body state.

\section{Even-Eulerian flat bands and their Fock lift}
\label{sec:eulerian}

Let $B\in\{0,1\}^{VH\times EH}$ be the unsigned incidence matrix,
\[
 B_{v,e}=\begin{cases}1,&v\in e,\\0,&v\notin e.\end{cases}
\]
For a simple graph,
\begin{equation}\label{eq:incidence-line}
 B^{\mathsf T}B=2I_m+A(\Line H),
 \qquad
 \ker B=\ker\bigl(A(\Line H)+2I_m\bigr).
\end{equation}
Indeed, a diagonal entry of $B^{\mathsf T}B$ equals $2$, while an
off-diagonal entry equals $1$ precisely when the two distinct root edges meet.

The combinatorial equivalence (1)$\Leftrightarrow$(2) below is exactly the
zero-sum $2$-flow criterion of Wang and Hu~\cite[Lemma~3]{WangHu2012}, where
both the componentwise statement and the Euler-tour proof appear.  The
zero-sum flow framework goes back to~\cite{Akbari2009}.  We recall it with a
short self-contained proof to fix conventions.  What this paper uses is not
the criterion itself but its translation, through the incidence
identity~\eqref{eq:incidence-line}, into a real equimodular full-support
flat-band orbital of $A(\Line H)$.

\begin{proposition}[Even-Eulerian criterion and its flat-band orbital]
\label{prop:even-eulerian}
Let $H=(V,E)$ be a finite connected simple graph with $m:=|E|>0$.  The
following conditions are equivalent.
\begin{enumerate}
 \item Every vertex of $H$ has even degree and $m$ is even.
 \item There is a signing $s:E\to\{+1,-1\}$ such that
 \[
     \sum_{e\ni v}s(e)=0
     \qquad\text{for every }v\in V.
 \]
\end{enumerate}
When these conditions hold,
\[
 \ket{\phi_s}:=\frac1{\sqrt m}\sum_{e\in E}s(e)\ket e
\]
is a uniform full-support eigenvector of $A(\Line H)$ with eigenvalue $-2$.
In particular,
\[
 |\bracket{e}{\phi_s}|^2=\frac1m
 \quad(e\in E),
 \qquad
 [A(\Line H),\ketbra{\phi_s}{\phi_s}]=0.
\]
No regularity assumption on $\Line H$ is required.
\end{proposition}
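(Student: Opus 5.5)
The proof splits into the combinatorial equivalence and the spectral translation, the latter being immediate from the incidence identity~\eqref{eq:incidence-line}.

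For (2)$\Rightarrow$(1), we read off parities. At a vertex $v$, the condition $\sum_{e\ni v}s(e)=0$ with $s(e)=\pm1$ forces $\deg v$ to be even, with equally many $+1$ and $-1$ edges at $v$. Summing over all vertices counts each edge twice, so $2|s^{-1}(+1)|=2|s^{-1}(-1)|$. Hence the two classes have equal size and $m$ is even.

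For (1)$\Rightarrow$(2), connectivity and even degrees give a closed Euler circuit $e_1e_2\cdots e_m$ traversing every edge once. Put $s(e_j)=(-1)^j$. Each time the circuit passes through a vertex $v$, it uses two consecutive edges $e_j,e_{j+1}$, and these have opposite signs, contributing $0$ to the sum at $v$. The closing pair $(e_m,e_1)$ also has opposite signs, because $m$ is even: $(-1)^m=+1$ and $(-1)^1=-1$. Every edge-incidence at $v$ arises from exactly one such consecutive pair, counting a loop through $v$ at the start vertex via the wraparound pair. So $\sum_{e\ni v}s(e)=0$ for all $v$.

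The one point needing care, and the main obstacle, is the bookkeeping at the starting vertex. There the wraparound pair is what uses the parity of $m$, and this is exactly where an odd $m$ would fail, as the $(2)\Rightarrow(1)$ count confirms. I would write the circuit cyclically, with indices mod $m$, so that every vertex visit, including the base point, is a pair $(e_j,e_{j+1})$. Since $H$ is simple, there are no loops or repeated incidences to worry about.

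For the eigenvector, observe that $(B\,s)_v=\sum_{e\ni v}s(e)=0$, so $s\in\ker B=\ker(A(\Line H)+2I_m)$ by~\eqref{eq:incidence-line}. Normalizing by $1/\sqrt m$ gives $\ket{\phi_s}$ with $A(\Line H)\ket{\phi_s}=-2\ket{\phi_s}$. Every coordinate equals $\pm1/\sqrt m$, which gives full support and $|\bracket{e}{\phi_s}|^2=1/m$. The commutator vanishes because $\ket{\phi_s}$ is an eigenvector of the Hermitian matrix $A(\Line H)$: both $A\ketbra{\phi_s}{\phi_s}$ and $\ketbra{\phi_s}{\phi_s}A$ equal $-2\ketbra{\phi_s}{\phi_s}$. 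No regularity of $\Line H$ enters anywhere, since only the incidence identity was used.
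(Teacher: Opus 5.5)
Your proposal is correct and follows essentially the same route as the paper. The paper also uses the alternating signing along a cyclically indexed Euler circuit, with the closing pair $(e_m,e_1)$ the one place where evenness of $m$ enters; your $(-1)^j$ differs from its $(-1)^{j-1}$ only by a global sign. It likewise uses double counting of incidences for the converse, and $s\in\ker B=\ker\bigl(A(\Line H)+2I_m\bigr)$ via~\eqref{eq:incidence-line} for the $-2$ eigenvector and the commutator.
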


\begin{proof}
Assume first that every vertex has even degree and $m$ is even.  Since $H$ is
connected, Euler's theorem~\cite{Bondy2008} gives an Euler circuit.  Write it
with cyclic indices as
\[
 v_0,e_1,v_1,e_2,\ldots,e_m,v_m,
 \qquad v_m=v_0,
 \qquad e_j=\{v_{j-1},v_j\}.
\]
Set $s(e_j)=(-1)^{j-1}$ and $e_{m+1}=e_1$.  At every occurrence of a vertex,
the incoming incidence from $e_j$ is paired with the outgoing incidence from
$e_{j+1}$.  These pairs partition all incidences at every vertex.  For
$1\le j<m$ their signs sum to
$(-1)^{j-1}+(-1)^j=0$.  The closing pair also sums to zero because
\[
 s(e_m)+s(e_1)=(-1)^{m-1}+1=0
\]
when $m$ is even.  Hence $Bs=0$.

Conversely, suppose $s:E\to\{+1,-1\}$ and $Bs=0$.  At each vertex the number
of positive incident edges equals the number of negative incident edges, so
every degree is even.  Since every edge has two endpoints,
\[
 0=\boldsymbol{1}_V^{\mathsf T}Bs
   =2\boldsymbol{1}_E^{\mathsf T}s
   =2\sum_{e\in E}s(e).
\]
Thus the positive and negative edge counts are equal, and $m$ is even.

Finally, Equation~\eqref{eq:incidence-line} gives
\[
 A(\Line H)s=(B^{\mathsf T}B-2I_m)s=-2s.
\]
Normalization yields $\phi_s$ and its uniform edge probabilities.  Since
$A(\Line H)$ is real symmetric,
\[
 A(\Line H)\ketbra{\phi_s}{\phi_s}
 =-2\ketbra{\phi_s}{\phi_s}
 =\ketbra{\phi_s}{\phi_s}A(\Line H),
\]
which proves the commutator statement.
\end{proof}

The incidence identity also has a useful Fock form.  Set
\[
 d_{v,\pm}:=\sum_{e\ni v}a_{e,\pm}.
\]
These are not independent canonical modes, but direct expansion gives
\begin{equation}\label{eq:incidence-fock}
 \mathrm d\Gamma_\pm(A(\Line H))
 =\sum_{v\in VH}d_{v,\pm}^\dagger d_{v,\pm}-2\widehat N.
\end{equation}
Thus the $-2$ flat band is the one-particle kernel of all incidence-mode
annihilators.

\begin{corollary}[Bosonic flat-band condensates]
\label{cor:boson-flat}
Under the hypotheses of Proposition~\ref{prop:even-eulerian}, form the
bosonic condensate of Section~\ref{sec:reduced} in the orbital
$\phi_s$,
\[
 \ket{N:\phi_s}=\frac{b^\dagger(\phi_s)^N}{\sqrt{N!}}\vac,
 \qquad
 b^\dagger(\phi_s)=\sum_{e\in E}\bracket{e}{\phi_s}\,b_e^\dagger.
\]
For every $N\ge0$,
\[
 \mathrm d\Gamma_+(A(\Line H))\ket{N:\phi_s}
 =-2N\ket{N:\phi_s},
\]
and the edge occupation is uniform:
\[
 \bra{N:\phi_s}b_e^\dagger b_e\ket{N:\phi_s}=\frac Nm
 \qquad(e\in E).
\]
\end{corollary}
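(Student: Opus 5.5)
The plan is to reduce both claims to one-particle facts already established: Proposition~\ref{prop:even-eulerian} for the eigenvalue, and the condensate computation of Section~\ref{sec:reduced} for the occupations. The key tool is the mode form \eqref{eq:dGamma-modes} of second quantization, which yields a commutation relation between $\mathrm d\Gamma_+(h)$ and $b^\dagger(\phi)$.

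\textbf{Step 1: the commutator.} For any one-particle $h$ and orbital $\phi$, expanding \eqref{eq:dGamma-modes} and using the canonical commutation relations $[b_e^\dagger b_f, b_g^\dagger]=\delta_{fg}b_e^\dagger$ gives
\[
 [\mathrm d\Gamma_+(h),b^\dagger(\phi)]=b^\dagger(h\phi).
\]
Take $h=A(\Line H)$ and $\phi=\phi_s$. Proposition~\ref{prop:even-eulerian} gives $A(\Line H)\phi_s=-2\phi_s$, so $[\mathrm d\Gamma_+(h),b^\dagger(\phi_s)]=-2\,b^\dagger(\phi_s)$.

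\textbf{Step 2: the eigenvalue.} Since $\mathrm d\Gamma_+(h)\vac=0$, induction on $N$ (a Leibniz expansion of $[\mathrm d\Gamma_+(h),b^\dagger(\phi_s)^N]$) gives
\[
 \mathrm d\Gamma_+(h)\,b^\dagger(\phi_s)^N\vac=-2N\,b^\dagger(\phi_s)^N\vac.
\]
The case $N=0$ is exactly the statement that the vacuum is annihilated. The normalization $1/\sqrt{N!}$ is irrelevant to the eigen-equation.

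As an alternative route, $b^\dagger(\phi_s)^N\vac/\sqrt{N!}$ corresponds in $\Sym^N\cH$ to $\phi_s^{\otimes N}$. On this vector, the sector form \eqref{eq:dGamma-sector} acts as $N$ copies of $h$, giving $-2N$ directly. One could also use \eqref{eq:incidence-fock}: each $d_v$ annihilates the state because $\sum_{e\ni v}s(e)=0$.

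\textbf{Step 3: the edge occupation.} Section~\ref{sec:reduced} gives $\gamma^{(1)}=N\ketbra{\phi_s}{\phi_s}$ for the condensate. Its diagonal entry is
\[
 \bra{N:\phi_s}b_e^\dagger b_e\ket{N:\phi_s}=N|\bracket{e}{\phi_s}|^2=\frac{N}{m},
\]
by the uniform-modulus statement of Proposition~\ref{prop:even-eulerian}.

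The only genuinely non-routine point, and so the main obstacle, is the commutator identity in Step~1. Everything else is bookkeeping: checking that $\ket{N:\phi_s}$ is normalized, which follows from $\|\phi_s\|=1$ via the standard identity $\|b^\dagger(\phi)^N\vac\|^2=N!\,\|\phi\|^{2N}$, and handling the trivial case $N=0$.
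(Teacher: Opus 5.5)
Your proof is correct. Your Step~3 is the paper's occupation argument: the paper applies $b_e\ket{N:\phi_s}=\sqrt N\,\bracket{e}{\phi_s}\ket{N-1:\phi_s}$ directly, and that identity is exactly how $\gamma^{(1)}=N\ketbra{\phi}{\phi}$ was obtained in Section~\ref{sec:reduced}.

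The eigenvalue is where the routes differ. The paper uses what you list as your alternative: it identifies $\ket{N:\phi_s}$ with $\phi_s^{\otimes N}\in\Sym^N\cH$ and reads off $-2N$ from the sector form \eqref{eq:dGamma-sector}. Your primary route, through $[\mathrm d\Gamma_+(h),b^\dagger(\phi)]=b^\dagger(h\phi)$ and a Leibniz expansion, stays entirely in the mode form \eqref{eq:dGamma-modes}. It therefore never needs to match the creation-operator state with the symmetric tensor, which the paper takes as standard. The paper's route is a one-liner once that identification is granted.

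Your route also carries over verbatim to fermions, because $[c_e^\dagger c_f,c_g^\dagger]=\delta_{fg}c_e^\dagger$ holds as well. The same expansion then proves that the Slater states $c^\dagger(\phi_1)\cdots c^\dagger(\phi_N)\vac$ with orthonormal $\phi_j\in\ker B$ have free energy $-2N$. The paper asserts this after the corollary without proof.
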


\begin{proof}
The state is the symmetric tensor $\phi_s^{\otimes N}$, so its energy is the
sum of $N$ copies of the eigenvalue $-2$.  Also
$b_e\ket{N:\phi_s}=\sqrt N\,\bracket{e}{\phi_s}\ket{N-1:\phi_s}$,
which gives the occupation formula.
\end{proof}

For fermions, the same orbital cannot be occupied twice.  More generally, if
$\phi_1,\ldots,\phi_N$ are orthonormal vectors in $\ker B$, then the Slater
state
\[
 c^\dagger(\phi_1)\cdots c^\dagger(\phi_N)\vac
\]
has free energy $-2N$.  Such a state exists only when
$N\le\dim\ker B$.  Its occupation at edge $e$ is
$\sum_{j=1}^N|\bracket{e}{\phi_j}|^2$.  It is uniform only under the additional
condition that the projector onto
$\operatorname{span}\{\phi_1,\ldots,\phi_N\}$ has constant diagonal.  The
even-Eulerian theorem alone does not imply this extra condition.

\section{Concluding remarks}\label{sec:conclusion}

The complex symmetric Schur representation separates three objects that
should not be conflated: an edge-amplitude vector, its supported symmetric
matrix, and its Hermitian density operator.  The corrected categorical tensor
lift is exact for distinguishable one-particle systems, whereas symmetric and
exterior powers give the many-body state spaces for identical particles.
Within Fock space, the line-graph Hamiltonian has the standard quadratic form,
its transition amplitudes are permanents or determinants, and its one-body
reduction evolves by the original line-graph walk.  The tensor lift fixes
the representation uniquely.  Any convention that applies to the class of
all finite simple graphs is confined to the signs $\pm1$, and the symmetric
choice is the only orientation-free one.

The even-Eulerian theorem provides a distinguished uniform flat-band
orbital.  It yields a uniform bosonic condensate at every particle number.
Fermionic flat-band states require several orthogonal incidence-kernel modes,
and uniform density is an additional projector condition.  For interacting
models, determining which incidence-kernel sectors remain invariant is a
separate problem.  No such invariance is assumed in this paper.

\subsection*{Acknowledgements}
The author is grateful for the Hyunsong Graduate Scholarship.


\end{document}